\newtheorem{theorem}{Theorem}
\newtheorem{lemma}{Lemma}
\newtheorem{proposition}{Proposition}
\theoremstyle{remark}
\newtheorem{remark}{Remark}
\theoremstyle{definition}
\newtheorem{definition}{Definition}
\title{Spatial statistics, image analysis and percolation theory}
\author{Mikhail Langovoy\thanks{Max Planck Institute for Intelligent Systems, Department Empirical Inference, and Max Planck Institute for Developmental Biology,  Spemannstrasse 38, D-72076,  T\"{u}bingen, Germany} \and  Michael Habeck\thanks{Max Planck Institute for Intelligent Systems, Department Empirical Inference, and Max Planck Institute for Developmental Biology,  Spemannstrasse 35, D-72076,  T\"{u}bingen, Germany} \and  Bernhard Sch\"{o}lkopf\thanks{Max Planck Institute for Intelligent Systems, Department Empirical Inference,  Spemannstrasse 38, D-72076,  T\"{u}bingen, Germany}
%\and Fourth Author\thanks{Fourth author's affiliation,  1100 E. 5th Street, Niles, MI 20724} \and Fifth Author\thanks{Fifth author's affiliation,  1100 E. 5th Street, Niles, MI 20724}
}
\begin{document}

\maketitle

\begin{abstract}
We develop a novel method for detection of signals and reconstruction of images in the presence of random noise. The method uses results from percolation theory. We specifically address the problem of detection of multiple objects of unknown shapes in the case of nonparametric noise. The noise density is unknown and can be heavy-tailed. The objects of interest have unknown varying intensities. No boundary shape constraints are imposed on the objects, only a set of weak bulk conditions is required. We view the object detection problem as a multiple hypothesis testing for discrete statistical inverse problems. We present an algorithm that allows to detect greyscale objects of various shapes in noisy images. We prove results on consistency and algorithmic complexity of our procedures. Applications to cryo-electron microscopy are presented.
\begin{keywords}
Statistics of networks, spatial statistics, multiple testing, image analysis, percolation theory, shape constraints, unsupervised statistical learning.
\end{keywords}
\end{abstract}

%\section{Primary Subhead\label{intro}}

%\section{Statistical model}\label{Section_Model}

\section{Statistical model\label{Section_Model}}

Suppose we have a noisy two-dimensional pixelized image. In the present paper we are interested in detection of multiple objects that have an \emph{unknown} colour and unknown, possibly different, shapes. The colour of the objects has to be different from the colour of the background. Thus, we are facing a nonparametric multiple testing problem.

Mathematically, we formalize our problem as follows. We have an $N \times N$ array of observations, i.e. we observe $N^2$ real numbers ${\{Y_{ij}\}}_{i,j=1}^N$. Denote the true value on the pixel $(i, j)$, $1 \leq i, j \leq N$, by $Im_{ij}$, and the corresponding noise by $\varepsilon_{ij}$. Suppose that the noise on the whole screen is independent and identically distributed with an \emph{unknown} distribution function $F$. According to the above,

\begin{equation}\label{1}
Y_{ij} = Im_{ij} +  \varepsilon_{ij}\,,
\end{equation}

\noindent where $1 \leq i, j \leq N$, and $\{ \varepsilon_{ij}\}$, $1 \leq i, j \leq N$ are i.i.d., and $Im_{ij}$ denotes the true value on pixel $(i, j)$. Assume additionally that for the noise

\begin{equation}\label{2}
\varepsilon_{ij} \sim F, \quad \mathbb{E}\, \varepsilon_{ij} = 0,  \quad Var\, \varepsilon_{ij} = \sigma^{2} < + \infty\,.
\end{equation}

Our next assumption is inspired by images from cryo-electron microscopy. We assume that pixels in all particles have color intensity $b$, where $b > a$. In other words,

\begin{equation}\label{3}
\left\{
           \begin{array}{ll}
           Im_{ij} =  b > a, & \hbox{if $(i,j)$ belongs to a particle;} \\
           Im_{ij} = a, & \hbox{if $(i,j)$ does not belong to any particle.}
           \end{array}
         \right.
\end{equation}

\noindent Here we assume that $a$ and $\sigma^2$, as well as the particle intensity $b$, are all unknown. The difficulty in estimating these parameters is that a priori we do not know which pixels contain pure noise and which pixels actually belong to some particle. Moreover, locations, shapes and exact sizes of particles are assumed to be unknown. The number of particles is also unknown, and in this paper we make no probabilistic assumptions about this number or about the distribution of particle locations. Thus we consider the case of a fully nonparametric noise of unknown level. Consistent estimates for $a$, $\sigma$ and $b$ were constructed in \cite{langovoy_habeck_schoelkopf_WSC} and \cite{langovoy_habeck_scan_estimators}. These estimators were called \emph{spatial scan estimators}.

We make now the following assumption. Let there exist a square $K_0$ of size at least $\varphi_0 (N) \times \varphi_0 (N)$ pixels such that $K_0$ doesn't intersect with any particle, and

\begin{equation}
\lim_{N \rightarrow \infty} \varphi_0 (N) \,=\, \infty \,.
\end{equation}

\noindent Here we only require that somewhere, in between the particles, there is one single square of size at least $\varphi_0 (N) \times \varphi_0 (N)$, such that this square is completely filled with noise. There could be several squares of this kind, or there could be bigger squares filled with noise. Locations of noisy squares are unknown. At the first step, our spatial scan estimator is going to find at least one of those noisy patches.

If a pixel $(i, j)$ is white in the original image, denote the corresponding probability distribution of $Y_{ij}$ by $P_0$. For a black pixel $(i, j)$ we denote the corresponding distribution by $P_1$. For any set of pixels $K$, we denote its cardinality by $|K|$. The empirical mean $\overline{Y}_K$ of all the values in $K$ is

\begin{equation}\label{5}
\overline{Y}_K \,=\, \frac{\,1\,}{\,|K|\,} \, \sum_{(i, j) \,\in\, K} Y_{ij} \,.
\end{equation}

\noindent The total sum of values in $K$ is denoted as

\begin{equation}\label{6}
S_K \,=\, \sum_{(i, j) \,\in\, K} Y_{ij} \,.
\end{equation}

Consider now some square $K$ of size $\varphi_0 (N) \times \varphi_0 (N)$ pixels. Suppose that $K$ contains $S_1^{(K)}(N)$ pixels from all of the particles in the image. The remaining $\varphi_0^2 (N) - S_1^{(K)}(N)$ pixels of $K$ contain pure noise. Obviously,

\begin{equation}
S_1^{(K)}(N) \,\leq\, \varphi_0^2 (N)
\end{equation}

\noindent and, by the definition of $K_0$, $ S_1^{(K_0)}(N) \,=\, 0 \,$. In \cite{langovoy_habeck_scan_estimators}, the following auxiliary statement was proved.

\begin{proposition}\label{Proposition_1}

 1) If for some $K$ we have $\lim_{N \rightarrow \infty} \, S_1^{(K)}(N)\,/\,\varphi_0^2 (N) \,=\, 0$, then $\overline{Y}_{K}$ is a consistent estimate of $a$.

%1) If the position of $K_0$ would be known, then $\overline{Y}_{K_0}$ would be a consistent estimate of $a$.

\noindent 2) In case if $\lim_{N \rightarrow \infty} \, S_1^{(K)}(N)\,/\,\varphi_0^2 (N) \,\neq\, 0$, it happens that $\overline{Y}_{K}$ is \emph{not} a consistent estimate of $a$.

\end{proposition}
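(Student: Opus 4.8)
The plan is to rewrite the empirical mean as a deterministic bias term plus a pure-noise average, and then control each piece separately. First I would split the sum in \eqref{5} according to the two pixel types. Since $K$ contains exactly $S_1^{(K)}(N)$ particle pixels, each with true value $b$, and $\varphi_0^2(N) - S_1^{(K)}(N)$ background pixels, each with true value $a$, substituting \eqref{1} and \eqref{3} into \eqref{5} gives, with the shorthand $p_N := S_1^{(K)}(N)/\varphi_0^2(N)$,
\begin{equation*}
\overline{Y}_K \;=\; a \;+\; (b-a)\,p_N \;+\; \frac{1}{\varphi_0^2(N)}\sum_{(i,j)\in K}\varepsilon_{ij}\,.
\end{equation*}
The first two terms are deterministic, while the last term is an average of $\varphi_0^2(N)$ i.i.d.\ copies of the noise.

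Next I would show that this noise average vanishes in probability. By \eqref{2} the $\varepsilon_{ij}$ have mean zero and variance $\sigma^2$, so the noise average has mean zero and variance $\sigma^2/\varphi_0^2(N)$. Because $\varphi_0(N)\to\infty$, this variance tends to $0$, and Chebyshev's inequality yields $\varphi_0^{-2}(N)\sum_{(i,j)\in K}\varepsilon_{ij}\xrightarrow{P}0$. It is worth emphasizing that finite variance is exactly what \eqref{2} supplies, so Chebyshev already suffices even though $F$ may be heavy-tailed; no higher moments are invoked.

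With this decomposition in hand both parts follow quickly. For part~1, if $p_N\to 0$ then the bias $(b-a)p_N\to 0$, whence $\overline{Y}_K\xrightarrow{P}a$, which is precisely consistency. For part~2, if $\lim_{N\to\infty}p_N\neq 0$, say $p_N\to c$ with $0<c\le 1$, then $(b-a)p_N\to(b-a)c$, and since $b>a$ this limit is strictly positive; hence $\overline{Y}_K\xrightarrow{P}a+(b-a)c\neq a$. Because limits in probability are unique, $\overline{Y}_K$ cannot simultaneously converge to $a$, so it fails to be a consistent estimate of $a$.

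The argument is short because the geometry is favorable: the pixels of $K$ carry genuinely i.i.d.\ noise, so there is no spatial dependence to fight, and the bias is an explicit affine function of $p_N$. The only point requiring care is part~2 in the event that $p_N$ does not actually converge but merely stays bounded away from $0$ along a subsequence; there I would argue along that subsequence that the bias remains above a fixed positive level, which already precludes convergence to $a$. This subsequence bookkeeping, rather than any analytic difficulty, is the main thing to get right.
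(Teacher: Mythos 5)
Your proof is correct: the affine decomposition $\overline{Y}_K \,=\, a + (b-a)\,p_N + \varphi_0^{-2}(N)\sum_{(i,j)\in K}\varepsilon_{ij}$ plus Chebyshev (which needs only the finite variance guaranteed in the model, exactly as you note) is the natural argument, and your subsequence remark properly covers the reading of part~2 in which $p_N$ merely fails to converge to $0$ rather than converging to a positive limit. Note that this paper does not prove Proposition~\ref{Proposition_1} itself but defers to \cite{langovoy_habeck_scan_estimators}, so there is no in-text proof to diverge from; your bias-plus-noise argument is the standard one and is consistent with how the proposition is used here (consistency of averaging exactly when the particle fraction of $K$ is asymptotically negligible).
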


\noindent This proposition clarifies why a naive approach of estimating $a$ by simple averaging of values over the whole screen leads to a generally inconsistent estimate. The naive estimator is consistent only when a combined size of all particles is negligible compared to the size of the screen. On the other hand, we see that sometimes the averaging procedure is consistent, namely, when the averaging is done over some $K$ such that this square has relatively little intersection with particles. This observation was crucial for our construction of scan estimators.

\section{Spatial scan estimators\label{Scan_Estimators}}

\subsection{Spatial scan estimators for the noise}

%Let $G = (E, V)$ be a graph on $n$ vertices. Suppose that to each vertex $v$ there corresponds a random variable $Y_v$. Suppose that the $Y_v$'s are are i.i.d. and satisfy assumptions (\ref{1}) - (\ref{3}). We call the corresponding variables $Im_v$ activities. The true activity level $a$ corresponds to normal or inactive vertices, while active vertices have the true activity level $b$. In this section, we propose estimators for both $a$ and $b$. Since our estimators will be based on a scan by $k$-nearest-neighbors, we call these estimators $k$-NN scan estimators.

%\begin{itemize}
%
%\item active cluster contains a ball of at least $\varphi (n)$ vertices
%
%\item there is $v \in G$ with a ball of at least $k(n)$ inactive nearest neighbors
%
%\item
%
%\[
%\lim_{n \to\infty} \frac{\varphi (n)}{\log n} = \infty
%\]
%
%\item
%\[
%\lim_{n \to\infty} \frac{k (n)}{\log n} = \infty
%\]
%
%\end{itemize}

\begin{definition}\label{Definition_Spatial_Scan_Estimator}

Let $\mathcal{K}_0$ be the collection of all $\varphi_0 (N) \times \varphi_0 (N)$  subsquares of the screen. Set

\begin{equation}\label{Equation_Noisy_Patch}
\widehat{K} \,=\, \arg\min_{K \subseteq \mathcal{K}_0} \{ S_K \} \,.
\end{equation}

\noindent We define a \emph{spatial scan estimator} (for the lower intensity) as

\begin{equation}
\widehat{a} \,:=\, {\varphi_0 (N)}^{-2} \sum_{v \in \widehat{K}} Y_v \,.
\end{equation}

\end{definition}

\noindent The spatial scan estimator for noisy patches can be computed via the following quasi-algorithm. \\

\noindent \textbf{Algorithm 1 (Spatial scan estimator 1).}

\begin{itemize}
\item Step 1. Calculate means over all $\varphi_0 (N) \times \varphi_0 (N)$  subsquares.

\item Step 2. Select a neighborhood $\widehat{K}$ with the smallest mean.

\item Step 3. Define $\widehat{a} \,:=\, {\varphi_0 (N)}^{-2}\sum_{v \in \widehat{K}} Y_v$.

%\item Step 4. Define $\widehat{F}$ as the empirical d.f. on $\widehat{\mathcal{K}}$.
\end{itemize}

\noindent Since the initial noisy image can be naturally viewed as a square lattice graph, where $k^2$-nearest neighbors correspond to a $k \times k$ subsquare on the screen, we see that the spatial scan estimator (based on a popular method of sliding windows) is a special case of the $k$-NN scan estimator defined in \cite{langovoy_habeck_scan_estimators}.

%In this special case, we call the $k$-NN scan estimator a spatial scan estimator.

One can also easily modify the spatial scan estimator to estimate the unknown noise distribution $F$. The difficulty here is again that we do not know which pixels contain pure noise. However, this problem is solved by a natural combination of the spatial scan estimator with the empirical distribution function.

\begin{definition}\label{Definition_Spatial_Scan_Estimator_Distribution}

Let $\widehat{K}$ be defined by (\ref{Equation_Noisy_Patch}). A spatial scan estimator of the noise distribution $F$ would be

\begin{equation}\label{Scan_Estimator_Noise_Distribution}
    \widehat{F} (t) \,:=\, \frac{\, 1 \,}{\, |\widehat{K}| \,} \sum_{v \in \widehat{K}} \mathbbm{1}_{ \{ Y_v \leq t \} }  \,.
\end{equation}

\noindent The following variation gives an unbiased consistent estimator for the noise variance $\sigma^2$:

\begin{equation}\label{Scan_Estimator_Noise_Variance}
    \widehat{\sigma}^2 \,:=\, \frac{\, 1 \,}{\, |\widehat{K}| - 1 \,} \sum_{v \in \widehat{K}} {(\, Y_v - \widehat{a} \,)}^2 \,.
\end{equation}

\end{definition}

\noindent These are consistent estimators of $F$ and $\sigma^2$ correspondingly, see \cite{langovoy_habeck_scan_estimators} for more details. Both scan estimators $\widehat{\sigma}^2$ and $\widehat{F} (t)$ can be easily calculated once $\widehat{a}$ is calculated.

\subsection{Computational complexity of scan estimators}

Scan statistics are sometimes criticized for their relatively high computational cost. It is not an issue for the simple form of scanning that is used in our estimators.

\begin{theorem}\label{Scan_Estimators_Computational_Complexity}
Spatial scan estimators $\widehat{a}$ and $\widehat{\sigma}^2$ can be calculated in $O (N^2)$ arithmetic operations, i.e. there are linear complexity algorithms that compute these estimators. Computational complexity does not depend on $\varphi_0 (N)$.
\end{theorem}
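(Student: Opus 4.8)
The plan is to avoid recomputing the overlapping window sums $S_K$ from scratch for each of the $\Theta(N^2)$ candidate subsquares, and instead to precompute a single cumulative-sum table from which every $S_K$ is read off in constant time. This is the classical summed-area table (integral image) device. First I would build the cumulative array
\begin{equation}
T_{ij} \,=\, \sum_{1 \leq i' \leq i,\ 1 \leq j' \leq j} Y_{i'j'} \,,
\end{equation}
with the zero-padding convention $T_{0j} = T_{i0} = 0$. The whole array can be filled in a single sweep over the pixels via the recurrence $T_{ij} = Y_{ij} + T_{i-1,j} + T_{i,j-1} - T_{i-1,j-1}$, which costs $O(1)$ per pixel and hence $O(N^2)$ in total.

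Second, for a square $K$ with lower corner $(i_1, j_1)$ and upper corner $(i_2, j_2)$, inclusion--exclusion yields
\begin{equation}
S_K \,=\, T_{i_2 j_2} - T_{i_1 - 1,\, j_2} - T_{i_2,\, j_1 - 1} + T_{i_1 - 1,\, j_1 - 1} \,,
\end{equation}
so each window sum is obtained in a fixed number of arithmetic operations \emph{independently of the window side} $\varphi_0(N)$. Since the collection $\mathcal{K}_0$ contains at most $N^2$ admissible windows, all the sums $S_K$ together with their running minimum are produced in $O(N^2)$ operations, after which the minimizing $\widehat{K}$ and the value $\widehat{a} = \varphi_0(N)^{-2} S_{\widehat{K}}$ follow in $O(1)$.

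For $\widehat{\sigma}^2$ I would repeat the construction with a second cumulative table built from the squared observations $Y_{ij}^2$, again at cost $O(N^2)$. Expanding
\begin{equation}
\sum_{v \in \widehat{K}} {(Y_v - \widehat{a})}^2 \,=\, \Big(\sum_{v \in \widehat{K}} Y_v^2\Big) - 2\,\widehat{a}\,\Big(\sum_{v \in \widehat{K}} Y_v\Big) + \varphi_0^2(N)\,\widehat{a}^2 \,,
\end{equation}
both sums on the right are read off the two tables in $O(1)$ once $\widehat{K}$ is known, so $\widehat{\sigma}^2$ adds nothing beyond the $O(N^2)$ already spent. Summing the contributions gives overall complexity $O(N^2)$ with no dependence on $\varphi_0(N)$.

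The step I expect to require the most care is the boundary bookkeeping (the $i_1 = 1$ or $j_1 = 1$ cases, handled cleanly by the zero-padding convention) and a careful verification that $|\mathcal{K}_0| = \Theta(N^2)$ rather than a count that secretly hides a $\varphi_0(N)$ factor. The genuine conceptual point, namely that the per-window cost is a fixed constant rather than the naive $\varphi_0^2(N)$, is entirely captured by the inclusion--exclusion identity above, which is why the complexity bound is insensitive to the scanning scale.
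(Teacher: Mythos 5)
Your proof is correct, and it takes a genuinely different route from the paper. The paper never leaves the sliding-window paradigm: it first computes all $1 \times \varphi_0(N)$ column sums by running a one-dimensional sliding window down each of the $N$ pixel columns (initial cost $\varphi_0(N)$, then $2$ operations per downward shift), and then assembles the $\varphi_0(N) \times \varphi_0(N)$ square sums by sliding horizontally across each row of subsquares, adding and subtracting one column sum per step; each stage is $O(N^2)$, and the window side enters only the initializations, which is why the total is independent of $\varphi_0(N)$. You instead build a summed-area table $T_{ij}$ once and read off every $S_K$ by the four-term inclusion--exclusion identity, so your per-window cost is a hard $O(1)$ rather than the paper's amortized $O(1)$ achieved through incremental updates. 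Your approach buys generality: a single $O(N^2)$ preprocessing pass answers sum queries for \emph{arbitrary} rectangles and arbitrary window sides, so the same table serves $\widehat{a}$ (side $\varphi_0(N)$) and $\widehat{b}$ (side $\varphi_1(N)$) simultaneously, whereas the paper's construction must be rerun for each window size (the paper notes the proof for $\widehat{b}$ is ``completely analogous'' rather than free). The paper's scheme, in turn, only ever adds and subtracts locally sized partial sums, so it avoids the large accumulated prefix totals of the integral image, a mild numerical-stability advantage that is irrelevant to the complexity claim. Your handling of $\widehat{\sigma}^2$ via a second table of squared observations and the expansion of $\sum_{v \in \widehat{K}} (Y_v - \widehat{a})^2$ is actually more explicit than the paper, which dismisses the variance and distribution-function estimators with the remark that they ``obviously have the same order of complexity.'' Both arguments establish the theorem as stated.
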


\begin{proof}
It is enough to consider only the algorithm for computing $\widehat{a}$, because the other two modifications obviously have the same order of complexity.

We start with Step 1 of the Algorithm, and propose an algorithm to calculate all the means on $\varphi_0 (N) \times \varphi_0 (N)$ subsquares in linear number of operations. Note that in our problem calculating all the means is equivalent to calculating all the sums over the subsquares.

As the first step in calculating the sums, one can calculate all sums over $1 \times \varphi_0 (N)$ columns of pixels. This can be done efficiently by running 1-dimensional sliding window separately by each $1 \times N$ columns of the whole screen. Within each of the $1 \times N$ columns, one uses $\varphi_0 (N)$ operations to calculate the sum in the upper $1 \times \varphi_0 (N)$ column, and then one needs just 2 operations to move one step down. Since there is $N - \varphi_0 (N) + 1$ little columns within the big one, we will calculate the little column sums within each big column in $O (N)$ operations, and all the column sums in the $N$ big columns will be calculated in $O (N^2)$ operations.

Now we switch to $\varphi_0 (N) \times \varphi_0 (N)$ squares. Computing the sum in the upper left square takes now $\varphi_0 (N)$ additions of column sums. Moving from this subsquare one step to the right will take 2 operations (adding and subtracting a column), so the upper row of squares will be done in $\varphi_0 (N) + 2 * (N - \varphi_0 (N)) \,=\, O (N)$ operations.

Notice that we can repeat this procedure in each of the $N - \varphi_0 (N) + 1$ horizontal rows of subsquares. According to the above, computing sums in all the $N - \varphi_0 (N) + 1$ subsquares on the far left takes altogether $(N - \varphi_0 (N) + 1) * \varphi_0 (N)$ additions of columns, and computing all the horizontal shifts takes $O ( (N - \varphi_0 (N) + 1) * N) \,=\, O (N^2)$ operations. At the same time, this completes Step 1 of the Algorithm, which means that there is a realization of Step 1 in $O (N^2)$ arithmetic operations.

To realize Step 2, we can use any of the standard algorithms that search for the smallest element in the array in linear number of operations (see examples in \cite{Aho_Hopcroft_Ullman}). Since there are $(N - \varphi_0 (N) + 1 )^2 \,=\, O (N^2)$ subsquares, the minimum will be found in $O (N^2)$ operations.

Thus, Steps 1 and 2 of the Algorithm can be realized in $O (N^2)$ operations, and the order of complexity doesn't depend on $\varphi_0 (N)$.
\end{proof}

The algorithm suggested in the proof is a theoretical construct and is not expected to be very efficient in practice. There are quicker algorithms for computing sliding window averages using variations of the Fast Fourier Transform.

\subsection{Spatial scan estimators for particles}

The above construction can be inverted to get a spatial scan estimator for particle intensities.

\begin{definition}\label{Definition_Spatial_Scan_Estimator_Particles}

Let $\mathcal{K}_1$ be the collection of all $\varphi_1 (N) \times \varphi_1 (N)$  subsquares of the screen. Set

\begin{equation}
\widehat{K}_1 \,=\, \arg\max_{K \subseteq \mathcal{K}_1} \{ S_K \} \,.
\end{equation}

\noindent We define a \emph{spatial scan estimator} (for the higher intensity) as

\begin{equation}
\widehat{b} \,:=\, {\varphi_1 (N)}^{-2} \sum_{v \in \widehat{K}_1} Y_v \,.
\end{equation}

\end{definition}

\noindent The spatial scan estimator for particle intensities can be computed via the following algorithm. \\

\noindent \textbf{Algorithm 2 (Spatial scan estimator 2).}

\begin{itemize}
\item Step 1. Calculate means over all $\varphi_1 (N) \times \varphi_1 (N)$  subsquares.

\item Step 2. Select a neighborhood $\widehat{K}_1$ with the largest mean.

\item Step 3. Define $\widehat{b} \,:=\, {\varphi_1 (N)}^{-2} \sum_{v \in \widehat{K}_1} Y_v$.

%\item Step 4. Define $\widehat{F}$ as the empirical d.f. on $\widehat{\mathcal{K}}$.
\end{itemize}

\noindent Naturally, this variation of the scan estimator also has linear computational complexity and the order of complexity doesn't depend on the sliding window side $\varphi_1 (N)$. The proof is completely analogous to that of Theorem \ref{Scan_Estimators_Computational_Complexity}.

\section{Consistency for bounded noise\label{Section_Bounded_Noise}}

In order to establish consistency of our spatial scan estimator, and to derive its rates of convergence, we might impose some additional assumptions on our model. Suppose that for all $i$ and $j$

\begin{equation}\label{18}
    |\, \varepsilon_{ij} \,| \,\leq\, M \quad \textit{almost surely}\,.
\end{equation}

\noindent This assumption is realistic in practice. For example, in the cryo-electron microscopy one often has $M \,=\, 1$.

The following bound is useful in quantifying behavior of spatial scan estimators (see \cite{langovoy_habeck_scan_estimators} for a proof).

\begin{proposition}\label{Proposition_7}
Let $K_0$ be any square with side $\varphi_0 (N)$ completely filled with noise, and let $\mathcal{K}$ be any collection of squares, each of size $\varphi_0 (N) \times \varphi_0 (N)$. Define $C_1 \,=\, 3\, (b - a)^2$, $ C_2 \,=\, 12\, \sigma^2 $, $ C_3 \,=\, 4 M \cdot (b-a) $. Then

\begin{equation}\label{28}
    P\, (\,\textit{some}\,\, K \,\in\, \mathcal{K} \,\, \textit{is chosen over} \,\, K_0 \,) \,\leq\, \sum_{K \,\in\, \mathcal{K}} \, \exp \, \biggr( -\, \frac{\, C_1 \, {\bigr[S_1^{(K)}(N)\bigr]}^2 \,}{\, C_2 \,|\,K \setminus K_0\,|\, +\, C_3 \, S_1^{(K)}(N) \,} \biggr) \,.
\end{equation}

\end{proposition}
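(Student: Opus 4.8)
The plan is to reduce the statement to a union bound over the single squares in $\mathcal{K}$ and then apply a Bernstein-type concentration inequality to each term, exploiting the bounded-noise assumption (\ref{18}).

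First I would unfold the meaning of ``$K$ is chosen over $K_0$''. Since $\widehat{K}$ minimizes $S_K$ in Definition \ref{Definition_Spatial_Scan_Estimator}, a square $K$ can be preferred to $K_0$ only if $S_K \leq S_{K_0}$. Hence the event in question is contained in $\bigcup_{K \in \mathcal{K}} \{ S_K \leq S_{K_0} \}$, and the union bound reduces everything to estimating $P(S_K \leq S_{K_0})$ for one fixed $K$; summing over $K \in \mathcal{K}$ then reproduces the right-hand side, so the whole difficulty sits in the single-square tail.

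Next I would compute $S_K - S_{K_0}$ using the model (\ref{1})--(\ref{3}). On $K_0$ every pixel is pure noise, contributing $a + \varepsilon_{ij}$; on $K$ exactly $S_1^{(K)}(N)$ pixels belong to particles and contribute $b + \varepsilon_{ij}$, the remaining ones contributing $a + \varepsilon_{ij}$. Subtracting, the background level $a$ cancels, the noise on the overlap $K \cap K_0$ cancels, and one is left with a deterministic gain $S_1^{(K)}(N)\,(b-a) > 0$ plus the centered sum $W := \sum_{v \in K \setminus K_0} \varepsilon_v - \sum_{v \in K_0 \setminus K} \varepsilon_v$. This cancellation of the shared noise is the structural reason the bound involves $|K \setminus K_0|$ and not $\varphi_0^2(N)$: since $|K| = |K_0|$ the two index sets have equal cardinality, so $W$ is a sum of $2\,|K \setminus K_0|$ independent, zero-mean summands of total variance $2\,\sigma^2\,|K \setminus K_0|$.

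Finally, $\{ S_K \leq S_{K_0} \}$ becomes $\{ -W \geq S_1^{(K)}(N)(b-a) \}$, a lower-tail deviation of a centered sum of independent terms. Because assumption (\ref{18}) forces each $\varepsilon_v$, hence each summand of $-W$, to be bounded by $M$ in absolute value, Bernstein's inequality applies and gives a bound of the form $\exp\bigl( - \tfrac{t^2/2}{V + Mt/3} \bigr)$ with $t = S_1^{(K)}(N)(b-a)$ and $V = 2\sigma^2 |K \setminus K_0|$. Clearing the denominator (multiplying numerator and denominator by $6$) turns this into exactly the claimed expression, with $C_1 = 3(b-a)^2$ and $C_2 = 12\sigma^2$, while the linear term carries the constant $2M(b-a)$; the stated $C_3 = 4M(b-a)$ is then a valid, slightly conservative choice, since enlarging the denominator only weakens the exponential bound. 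The main obstacle is thus not conceptual but bookkeeping: verifying the variance and boundedness inputs to the concentration inequality and tracking the numerical constants, together with the care needed to confirm that the noise genuinely cancels on $K \cap K_0$ so that $W$ is a sum over disjoint, and hence independent, pixel sets.
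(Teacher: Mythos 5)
Your proposal is correct and follows what is essentially the intended argument: the paper itself defers the proof of Proposition~\ref{Proposition_7} to \cite{langovoy_habeck_scan_estimators}, but the constants $C_1 = 3(b-a)^2$, $C_2 = 12\sigma^2$, $C_3 = 4M(b-a)$ are precisely the signature of a union bound followed by Bernstein's inequality applied to the centered noise sum over $K \setminus K_0$ and $K_0 \setminus K$ after cancellation on $K \cap K_0$, exactly as you set it up. Your bookkeeping is also sound: applying Bernstein to the $2\,|K \setminus K_0|$ individual summands bounded by $M$ gives the slightly sharper linear constant $2M(b-a)$, while the stated $C_3 = 4M(b-a)$ corresponds to pairing the terms into differences $\varepsilon_u - \varepsilon_v$ bounded by $2M$, so your bound implies the claimed one, as you correctly note.
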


\noindent We will also use the notation

\begin{equation}\label{29}
\widehat{a}_K \,:=\, \overline{a}_{\widehat{K}} \,.
\end{equation}

%$\lim_{N \rightarrow \infty} \, \varphi_0 (N) {(\,\log N\, )}^{-1/2} \,=\, \infty$.

\begin{theorem}\label{Theorem_1}
Suppose that

\begin{equation}\label{38}
\lim_{N \rightarrow \infty} \, \frac{\, \varphi_0 (N) \,}{\, \sqrt{\,\log N\,} \,} \,=\, \infty \,.
\end{equation}

\noindent Then the spatial scan estimator (of lower intensities) $\widehat{a} \,=\, \overline{a}_{\widehat{K}}$ is a consistent estimate of $a$. If

\begin{equation}\label{38_1}
\lim_{N \rightarrow \infty} \, \frac{\, \varphi_1 (N) \,}{\, \sqrt{\,\log N\,} \,} \,=\, \infty \,,
\end{equation}

%$\lim_{N \rightarrow \infty} \, \varphi_1 (N) {( \,\log N\, )}^{-1/2} \,=\, \infty$,

\noindent then the spatial scan estimator (for higher intensities) $\widehat{b}$ is consistent for $b$.
\end{theorem}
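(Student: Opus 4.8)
The plan is to show $\widehat a=\overline Y_{\widehat K}\to a$ in probability by splitting the error of the selected square into two pieces: its particle contamination $S_1^{(\widehat K)}(N)/\varphi_0^2(N)$, and the noise averaged over it, $\varphi_0^{-2}(N)\sum_{v\in\widehat K}\varepsilon_v$. The fundamental difficulty is that $\widehat K$ is data-dependent (it minimizes $S_K$), so neither Proposition~\ref{Proposition_1} nor a plain law of large numbers applies to it directly; both facts must be upgraded to bounds that hold \emph{uniformly} over all candidate subsquares, and assumption \eqref{38} is precisely what pays for the corresponding union bounds.

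First I would fix $\delta>0$ and let $\mathcal K_\delta$ be the collection of all $\varphi_0(N)\times\varphi_0(N)$ subsquares $K$ with $S_1^{(K)}(N)\ge\delta\,\varphi_0^2(N)$. Since $\widehat K$ attains the global minimum of $S_K$ and $K_0$ is one of the candidates, we always have $S_{\widehat K}\le S_{K_0}$; thus $\{S_1^{(\widehat K)}(N)\ge\delta\varphi_0^2(N)\}$ is contained in the event that some $K\in\mathcal K_\delta$ is chosen over $K_0$, which Proposition~\ref{Proposition_7} controls. For every $K\in\mathcal K_\delta$ one has $\bigl[S_1^{(K)}\bigr]^2\ge\delta^2\varphi_0^4$, while the denominator in \eqref{28} is at most $(C_2+C_3)\varphi_0^2$ (using $|K\setminus K_0|\le\varphi_0^2$ and $S_1^{(K)}\le\varphi_0^2$), so each summand is at most $\exp(-c_\delta\varphi_0^2)$ with $c_\delta=C_1\delta^2/(C_2+C_3)$. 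Bounding $|\mathcal K_\delta|\le(N-\varphi_0+1)^2\le N^2$ then gives
\begin{equation}
P\bigl(S_1^{(\widehat K)}(N)\ge\delta\varphi_0^2(N)\bigr)\;\le\;\exp\bigl(2\log N-c_\delta\,\varphi_0^2(N)\bigr),
\end{equation}
which tends to $0$ under \eqref{38}, since that hypothesis is equivalent to $\varphi_0^2(N)/\log N\to\infty$.

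Next I would control the averaged noise uniformly. For each fixed $K$ the full set of $\varphi_0^2(N)$ pixel noises $\{\varepsilon_v\}_{v\in K}$ are i.i.d., mean zero, and bounded by $M$ (assumption \eqref{18}), so Hoeffding's inequality gives $P\bigl(|\varphi_0^{-2}\sum_{v\in K}\varepsilon_v|>\epsilon\bigr)\le 2\exp(-\varphi_0^2\epsilon^2/(2M^2))$. A union bound over the $\le N^2$ subsquares yields
\begin{equation}
P\Bigl(\max_{K}\bigl|\varphi_0^{-2}(N)\textstyle\sum_{v\in K}\varepsilon_v\bigr|>\epsilon\Bigr)\;\le\;2\exp\bigl(2\log N-\varphi_0^2(N)\,\epsilon^2/(2M^2)\bigr),
\end{equation}
which again vanishes under \eqref{38}. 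On the complement of these two bad events I would use the identity $\overline Y_{\widehat K}-a=(b-a)\,S_1^{(\widehat K)}/\varphi_0^2+\varphi_0^{-2}\sum_{v\in\widehat K}\varepsilon_v$ to obtain $|\widehat a-a|\le(b-a)\delta+\epsilon$. Given $\eta>0$, choosing $\delta$ and $\epsilon$ with $(b-a)\delta+\epsilon<\eta$ shows $P(|\widehat a-a|>\eta)\to0$, which is consistency.

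The estimator $\widehat b$ follows by the mirror-image argument: $\widehat K_1$ \emph{maximizes} $S_K$ over $\varphi_1(N)\times\varphi_1(N)$ squares, so the roles of particle and noise pixels are exchanged, and one invokes the analogue of Proposition~\ref{Proposition_7} together with the same uniform Hoeffding bound, now under \eqref{38_1}. I expect the main obstacle to be exactly this uniformity step: converting the pointwise concentration of a single square's average into a statement valid for the random minimizer $\widehat K$. The price of the union bound is the additive $2\log N$ appearing in the exponents above, and \eqref{38}/\eqref{38_1} — i.e. $\varphi_0^2/\log N\to\infty$ and $\varphi_1^2/\log N\to\infty$ — is precisely the threshold at which the concentration exponent $\varphi_0^2$ overwhelms it.
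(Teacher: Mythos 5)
Your proof is correct and follows essentially the route the paper intends: Proposition~\ref{Proposition_7}, with the bounds ${\bigl[S_1^{(K)}\bigr]}^2 \geq \delta^2 \varphi_0^4(N)$ in the numerator and $C_2\,|K \setminus K_0| + C_3\, S_1^{(K)} \leq (C_2 + C_3)\,\varphi_0^2(N)$ in the denominator, combined with a union bound over the $O(N^2)$ candidate squares and a uniform Hoeffding bound for the noise averages (valid under (\ref{18})), which is exactly what hypothesis (\ref{38}) --- equivalently $\varphi_0^2(N)/\log N \rightarrow \infty$ --- is calibrated to overcome. The only point worth stating explicitly in your mirrored argument for $\widehat{b}$ is that it requires a square of side $\varphi_1(N)$ lying entirely inside some particle to play the role of $K_0$; the paper only makes this assumption explicit later, in Section~\ref{Detection_Multiple_Particles}.
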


What can be said now about the rate of convergence of $\widehat{a}$ to the true $a$? In case if $K_0$ would be known, $\overline{a}_{K_0}$ would in general be an asymptotically efficient estimate of $a$, and also a $\sqrt{\, |\, K_0 \,| \,}$-consistent ($= \,\varphi_0 (N)$-consistent) estimate. One might expect that the rate might be at least $\frac{\, \sqrt{\,\log N\,} \,}{\, \varphi_0 (N) \,}$. The additional $\sqrt{\,\log N\,}$-factor is often called a "price for adaptation" in the literature, and estimates that slow down by a logarithmic factor are called "adaptive" (in our case, adaptation is to the extra difficulty of not knowing whether each particular observation belongs to the noise or not).

However, it can be shown that the spatial scan estimators $\widehat{a}$, $\widehat{F}$ and $\widehat{b}$ actually achieve the corresponding parametric rates of convergence, and that happens uniformly within nonparametric classes of noise distributions. This is a stronger type of adaptivity. See \cite{langovoy_scan_estimators_rates} for more details.

\section{Simultaneous detection of multiple particles\label{Detection_Multiple_Particles}}

%Now we are ready to describe one of the main ingredients of our method: the \emph{thresholding}. The idea of the thresholding is as follows: in the noisy grayscale image ${\{Y_{ij}\}}_{i,j=1}^N$, we pick some pixels that look as if their real colour was black. Then we colour all those pixels black, irrespectively of the exact value that was observed on them. We colour white all the pixels that weren't coloured black at the previous step. At the end of this procedure, we would have a transformed vector of 0's and 1's, call it $\{\overline{Y}_{i,j}\}_{i,j=1}^{N}$. We will be able to analyse this transformed picture by using certain results from the mathematical theory of percolation.

The key idea of our particle detection approach is to threshold the noisy picture at a properly chosen level, and then to use percolation theory to do statistics on black and white clusters of the resulting binary image. More details about our approach can be found in \cite{langovoy_davies_wittich}, \cite{langovoy_report_2009-035}, \cite{langovoy_report_Robust_Detection}, \cite{Langovoy_Wittich_Square}, \cite{langovoy_wittich_report_R} and \cite{langovoy_wittich_report_Realistic_Pictures}. See also related recent work \cite{Arias-Castro_etal_Cluster_Detection} and \cite{Arias-Castro_Grimmett}.

We transform the observed noisy image $\{Y_{i,j}\}_{i,j=1}^{N}$ in the following way: for all $1 \leq i, j \leq N$,\par\smallskip

1. $\quad$ If $Y_{ij} \geq \theta (N)$, set $\overline{Y}_{ij} := 1$ (i.e., in the transformed picture the corresponding pixel is coloured black).\smallskip

2. $\quad$ If $Y_{ij} < \theta (N)$, set $\overline{Y}_{ij} := 0$ (i.e., in the transformed picture the corresponding pixel is coloured white).\smallskip

\noindent The above transformation is called \emph{thresholding at the level} $\theta (N)$. The resulting array $\{\overline{Y}_{i,j}\}_{i,j=1}^{N}$ is called a \emph{thresholded picture}.

One can think of pixels from $\{\overline{Y}_{i,j}\}_{i,j=1}^{N}$ as of colored vertices of a sublattice $G_N$ of some planar lattice. We add coloured edges to $G_N$ in the following way. If any two black vertices are neighbours in the underlying lattice, we connect these two vertices with a black edge. If any two white vertices are neighbours, we connect them with a white edge. The choice of underlying lattice is important: different definitions can lead to testing procedures with different properties, see \cite{langovoy_report_2009-035}, \cite{langovoy_davies_wittich} and \cite{langovoy_report_Robust_Detection}. The method becomes especially robust when we work with an $N \times N$ subset of the \emph{triangular} lattice $\mathbb{T}^2$.

Our goal is to choose a threshold $\theta (N)$ such that

\[
P_0 (\,Y_{ij} \geq \theta (N)\,) < p_{c}^{site} \,,
\]

\noindent but

\[
p_{c}^{site} < P_1 (\,Y_{ij} \geq \theta (N)\, ) \,,
\]

\noindent where $p_{c}^{site}$ is the critical probability for site percolation on $\mathbb{T}^2$ (see \cite{Grimmett}). If this is the case, we will observe a so-called \emph{supercritical} percolation of black clusters within particles, and a \emph{subcritical} percolation of black clusters on the background. There will be a high probability of forming large black clusters in place of particles, but there will be only little black clusters in place of noise. The difference between the two regions is the main component in our image analysis method. \\

\noindent \textbf{Algorithm 3 (Detection of multiple objects).}

\begin{itemize}
\item
Step 0. Estimate $a$ and $b$ via the spatial scan estimator. Set $\theta (N) := ( \widehat{a} + \widehat{b} )/2$.

\item Step 1. Perform $\theta(N)-$thresholding of the noisy picture $\{Y_{i,j}\}_{i,j=1}^{N}$.

\item Step 2. Until all black clusters are found, run depth-first search on the graph $G_N$ of the $\theta(N)-$thresholded picture $\{{\overline{Y}}_{i,j}\}_{i,j=1}^{N}$

\item Step 3. When a black cluster of size $\varphi_1 (N) $ was found, report that there is a particle corresponding to this cluster.

\item Step 4. If no black cluster was larger than $\varphi_1 (N) $, report that there were no particles.
\end{itemize}

%\noindent At Step 2 our algorithm finds and stores not only sizes of black clusters, but also coordinates of pixels constituting each cluster. We remind that $\theta(N)$ is defined as in (\ref{6}), $G_N$ and $\{{\overline{Y}}_{i,j}\}_{i,j=1}^{N}$ were defined in Section \ref{Section_Thresholding}, and $\varphi_{im} (N)$ is any function satisfying (\ref{12}). The depth-first search algorithm is a standard procedure used for searching connected components on graphs. This procedure is a deterministic algorithm. The detailed description and rigorous complexity analysis can be found in \cite{Tarjan}, or in the classic book \cite{Aho_Hopcroft_Ullman}, Chapter 5.

%Let us prove that Algorithm 1 works, and determine its complexity.

\noindent Suppose that the image contains $\pi (N)$ particles. Assume

\begin{equation}
\lim_{N \to\infty} \frac{\,\pi (N)\,}{\,\log N\,} = 0\,.
\end{equation}

\noindent Assume also that each particle covers a square of size at least $\varphi_1 (N) \times \varphi_1 (N)$ pixels, and

\begin{equation}\label{7}
\lim_{N \to\infty} \frac{\varphi_1 (N)}{\log N} = \infty \,.
\end{equation}

Let $A$ be a set of pixels corresponding to some particle. Define a corresponding (random) set

\[
thresh (A) \,:=\, \{\overline{Y}_{ij} \,|\, Y_{ij} \in A\} \,.
\]

\noindent For each single particle $A$, let us say that the particle is \emph{detected} by the Algorithm, if $thresh (A)$ contains a significant black cluster. By significant cluster we mean a black cluster of at least $\varphi_1 (N)$ vertices.

For a collection of several particles, the individual significant clusters can possibly merge into bigger clusters. We allow for this in the detection theorem, and leave the task of particle separation for the future pattern recognition work.

\begin{theorem}\label{Theorem1}
Let the noise be symmetric and satisfy the above assumptions. Then \medskip

%\begin{enumerate}
%  \item

\noindent 1.  Algorithm 3 finishes its work in $O(N^2)$ steps, i.e. it has linear computational complexity.\medskip

%  \item

\noindent 2.  If there were $\pi (N)$ particles in the picture, Algorithm 3 detects all of them with probability that tends to 1 as $N \rightarrow \infty$. Moreover, the probability of missing any particles decreases to 0 as $N \rightarrow \infty$ with the rate no slower than

      \begin{equation}\label{43}
        r (N) \,=\, \exp \bigr(\, \pi (N) \ln 2 - C_1  \varphi_1 (N)\,\bigr) \,.
      \end{equation}

%  \item

\noindent 3.  If there were no particles in the picture, the probability that the Algorithm 3 falsely detects a particle, doesn't exceed $\exp(-C_2(\sigma) \varphi_1 (N))$ for all $N > N(\sigma)$. \medskip
%\end{enumerate}

\noindent The constants $C_1, C_2 > 0$ and $N(\sigma)\in\mathbb{N}$ depend only on $\sigma$ and not on particle shapes or exact sizes.

\end{theorem}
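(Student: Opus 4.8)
The plan is to handle the three claims in turn, since each draws on a different tool. For claim 1 I would simply trace the cost of Algorithm 3 line by line. Step 0 is $O(N^2)$ by Theorem \ref{Scan_Estimators_Computational_Complexity}; Step 1 inspects each of the $N^2$ pixels once and is therefore $O(N^2)$. The decisive step is Step 2: because $G_N$ is an $N \times N$ patch of the fixed triangular lattice $\mathbb{T}^2$, every vertex has degree at most $6$, so $G_N$ has $N^2$ vertices and $O(N^2)$ edges, and a depth-first search visits the whole graph in $O(|V| + |E|) = O(N^2)$ operations. The size of each black cluster can be accumulated during the search, so Steps 3 and 4 add nothing to the order of cost, and the total is $O(N^2)$.

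Before the probabilistic claims I would fix the percolation regime they share. The structural fact is $p_c^{site} = 1/2$ on $\mathbb{T}^2$. Under symmetric noise and the ideal threshold $\theta = (a+b)/2$, a background pixel is black with probability $P_0(Y_{ij} \geq \theta) = P(\varepsilon_{ij} \geq (b-a)/2) < 1/2$, while a particle pixel is black with probability $P_1(Y_{ij} \geq \theta) = P(\varepsilon_{ij} \geq -(b-a)/2) > 1/2$; symmetry together with $b>a$ forces both strict inequalities, and the size of the gaps is controlled by $\sigma$. Hence the background percolates subcritically and particles supercritically. For claim 3 (no particles, so the screen is pure noise) a false detection means some black cluster has at least $\varphi_1(N)$ vertices. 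In the subcritical phase the cluster of a fixed vertex has an exponentially decaying size tail, $P(|C_v| \geq \varphi_1(N)) \leq e^{-c(\sigma)\varphi_1(N)}$, with rate $c(\sigma)>0$ set by the gap $1/2 - P_0(Y_{ij} \geq \theta)$. A union bound over the $N^2$ vertices gives $N^2 e^{-c(\sigma)\varphi_1(N)} = \exp(2\log N - c(\sigma)\varphi_1(N))$, and since $\varphi_1(N)/\log N \to \infty$ the $\log N$ term is absorbed, yielding $\exp(-C_2(\sigma)\varphi_1(N))$ for all $N > N(\sigma)$.

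For claim 2 I would detect each particle separately and then combine. Inside the $\varphi_1(N) \times \varphi_1(N)$ box guaranteed within a particle, supercriticality produces a left-to-right spanning black cluster with probability at least $1 - e^{-C_1 \varphi_1(N)}$, by the exponential decay of non-crossing probabilities in the supercritical phase; such a spanning cluster automatically has at least $\varphi_1(N)$ vertices and is therefore significant. Distinct particles contain disjoint boxes and the noise is independent across pixels, so these detection events are independent and $P(\text{all detected}) \geq (1 - e^{-C_1\varphi_1(N)})^{\pi(N)}$. Writing $q = e^{-C_1\varphi_1(N)}$ and using $1 - (1-q)^{\pi(N)} = \sum_{k \geq 1}\binom{\pi(N)}{k}(-1)^{k+1}q^k \leq q\sum_{k\geq 1}\binom{\pi(N)}{k} \leq 2^{\pi(N)}q$ (valid since $q<1$) produces exactly $r(N) = \exp(\pi(N)\ln 2 - C_1\varphi_1(N))$, which explains the otherwise mysterious factor $2^{\pi(N)}$; the hypotheses $\pi(N)=o(\log N)$ and $\varphi_1(N)/\log N \to \infty$ then force $r(N) \to 0$.

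The main obstacle is not any single percolation estimate — the subcritical size tail and the supercritical crossing bound can be quoted from standard percolation theory — but the coupling of these estimates with the data-driven threshold $\widehat\theta = (\widehat a + \widehat b)/2$ that the algorithm actually uses. One must pass to the event that $\widehat\theta$ lies simultaneously below the background's critical level and above the particles' critical level; by consistency of the scan estimators (Theorem \ref{Theorem_1}) this event has probability tending to $1$, and for $N > N(\sigma)$ one may work on it. The care lies in verifying that this good event is so close to certain that it does not degrade the exponential rates, and that the percolation constants can be chosen uniformly over the resulting range of effective occupation probabilities; this is precisely where the $\sigma$-dependence of $C_1, C_2$ and of $N(\sigma)$ enters.
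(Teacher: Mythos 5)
Your proposal is correct and follows essentially the same route as the paper's proof: Part 1 via Theorem \ref{Scan_Estimators_Computational_Complexity} plus a linear-time depth-first search on the bounded-degree lattice graph, Part 2 via a per-particle supercritical detection bound of the form $1-\exp(-C_1\varphi_1(N))$ combined across disjoint particles by independence and then the crude estimate $C^{k}_{\pi(N)} \leq 2^{\pi(N)}$ to produce exactly the rate (\ref{43}), and Part 3 via the subcritical exponential cluster-size tail for the pure-noise picture — the only difference being that the paper imports the two percolation estimates from Theorems 5 and 1 of \cite{langovoy_davies_wittich} rather than quoting standard supercritical crossing and subcritical tail bounds directly. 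Your closing caveat about coupling these estimates to the data-driven threshold $\widehat{\theta} = (\widehat{a}+\widehat{b})/2$ (where mere consistency of the scan estimators would not by itself preserve the exponential rates) identifies a point that the paper's own proof also passes over in silence, so it is an honest extra observation rather than a deviation from the paper's argument.
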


\begin{remark}
The rate bound (\ref{43}) given in this theorem is certainly not the actual rate achieved by the algorithm. This bound serves as a preliminary quantification of a trade-off between the power of a multiple testing method on one side, and the number of particles and their sizes on the other side. Algorithm 3 implements a multiple testing procedure, and the bound shows that, under appropriate conditions, this multiple testing procedure is consistent with reasonably bounded probabilities of Type I and Type II errors.
\end{remark}

\begin{proof}
\emph{Part I}. By Theorem \ref{Scan_Estimators_Computational_Complexity}, Step 0 of the Algorithm requires $O(N^2)$ operations. Once the threshold $\widehat{\theta}$ is computed, Steps 1 - 4 can be shown to require $O(N^2)$ operations altogether. The argument here is the same as that in the proof of Theorem 1 in \cite{langovoy_davies_wittich}.

\emph{Part II}. Let $A$ be any set of pixels. Let $\mathfrak{D}_1 (A)$ denote the following random event:

\[
\mathfrak{D}_1 (A) \,=\, \{\, \omega \,|\,\, thresh(A) \,\, \textit{contains black cluster with at least}\,\, \varphi_1 (N) \,\,\textit{vertices} \,\} \,.
\]

\noindent We have the following lemma.

\begin{lemma}\label{Lemma_1}
Let $A$ be a set of pixels corresponding to some single particle. Then

\begin{equation}\label{39}
    P \bigr(\, \mathfrak{D}_1 (A) \,\bigr) \,\geq\, 1 - \exp \bigr(\, - C_1  \varphi_1 (N)\,\bigr) \,,
\end{equation}

\noindent where $C_1 > 0$ is independent of $N$ and the inequality holds for all $N \geq N_0$.

\end{lemma}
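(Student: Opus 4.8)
The plan is to recognise $\mathfrak{D}_1(A)$ as a supercritical percolation event on a box and to bound it below by a crossing probability. First I would restrict attention to the $\varphi_1(N) \times \varphi_1(N)$ square $Q \subseteq A$ guaranteed by the hypothesis, and note that inside the particle each observation is $Y_{ij} = b + \varepsilon_{ij}$, so after thresholding the vertex $(i,j)$ is black with probability $P_1(Y_{ij} \geq \theta(N)) = P(\varepsilon_{ij} \geq \theta(N) - b)$. The decisive point is that the noise is symmetric and we work on $\mathbb{T}^2$, where $p_{c}^{site} = 1/2$: if the threshold lies below $b$, say $\theta(N) = b - \eta$ with $\eta > 0$, symmetry gives $P(\varepsilon_{ij} \geq -\eta) = P(\varepsilon_{ij} \leq \eta) = \tfrac12 + P(0 < \varepsilon_{ij} \leq \eta) =: p_1 > \tfrac12 = p_{c}^{site}$. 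Thus, conditionally on a good threshold, the black vertices of $Q$ form i.i.d. site percolation in the strictly supercritical regime, the gap $p_1 - p_{c}^{site}$ being controlled by $b - \theta(N)$ and the spread of $F$.

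Second, I would invoke the finite-box estimate for supercritical site percolation. Because site percolation on $\mathbb{T}^2$ is self-matching at $p = 1/2$, the absence of a black left--right crossing of $Q$ forces a white top--bottom crossing, and the white vertices are subcritical since $1 - p_1 < \tfrac12$. By the exponential decay of connectivities in the subcritical phase (Menshikov, Aizenman--Barsky), the probability of such a white crossing spanning a box of side $m = \varphi_1(N)$ is at most $\exp(-c\,m)$ for some $c = c(p_1) > 0$. Hence a black left--right crossing of $Q$ exists with probability at least $1 - \exp(-c\,\varphi_1(N))$, and since any such crossing is a single connected black set meeting every column of $Q$ it contains at least $\varphi_1(N)$ vertices, so its existence implies $\mathfrak{D}_1(A)$.

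It remains to remove the conditioning on the threshold, for which I would combine monotonicity with the consistency of the scan estimators. Fix a deterministic $\theta_0 = b - \delta$ with $0 < \delta < (b-a)/2$ and call a vertex ``$\theta_0$-black'' if $Y_{ij} \geq \theta_0$; these are i.i.d. Bernoulli$(p_1)$ with $p_1 = P(\varepsilon_{ij} \leq \delta) > \tfrac12$. On the event $\{\theta(N) \leq \theta_0\}$ every $\theta_0$-black vertex is also black in the true configuration, so a $\theta_0$-crossing is a genuine crossing; as $\mathfrak{D}_1(A)$ is increasing in blackness this yields
\[
P\bigl(\mathfrak{D}_1(A)\bigr) \;\geq\; P\bigl(\{\theta_0\text{-crossing exists}\}\bigr) - P\bigl(\theta(N) > \theta_0\bigr),
\]
an inequality valid irrespective of the dependence between the particle pixels and the data-dependent threshold. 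The first term is at least $1 - \exp(-c\,\varphi_1(N))$ by the previous paragraph, while Theorem \ref{Theorem_1}, together with the concentration estimate of Proposition \ref{Proposition_7}, shows $\widehat{a} \to a$ and $\widehat{b} \to b$ and furnishes a bound $P(\theta(N) > b - \delta) \leq \exp(-c'\,\varphi_1(N))$ under the growth condition $\varphi_1(N)/\log N \to \infty$. Setting $C_1 = \min(c, c')$ gives the stated inequality for all $N \geq N_0$.

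The main obstacle is the percolation estimate of the second paragraph: a crossing that fails with probability only $\exp(-C_1\varphi_1(N))$, i.e. exponentially small in the box \emph{side}, requires the sharp exponential decay of subcritical connectivities rather than mere convergence of crossing probabilities to $1$, and one must verify that the decay rate $c(p_1)$ stays bounded away from zero as the model parameters vary, so that ultimately $C_1$ can be taken to depend only on $\sigma$. The threshold step is comparatively routine once a sufficiently fast rate for $\theta(N) \to (a+b)/2$ is extracted from the consistency results.
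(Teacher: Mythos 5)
Your proposal is correct in outline but takes a genuinely different route from the paper. The paper's proof of Lemma \ref{Lemma_1} is essentially a citation: it invokes Theorem 5 of \cite{langovoy_davies_wittich}, which supplies the bound $P(\mathfrak{D}_1(A)) \geq 1 - (\varphi_1(N)+1)\,e^{-C_1'\varphi_1(N)}$, and then absorbs the polynomial prefactor into the exponent by shrinking the constant ($1-(n+1)e^{-C_1'n} > 1 - e^{-C_1 n}$ for some $0 < C_1 < C_1'$ and all $n \geq N_0$). What you do instead is reconstruct the content of that imported theorem from first principles: supercritical site percolation inside the guaranteed $\varphi_1(N)\times\varphi_1(N)$ square, self-matching duality on $\mathbb{T}^2$ (failure of a black left--right crossing forces a white top--bottom crossing), and exponential decay of subcritical connectivities to kill the white crossing. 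This is very plausibly how the cited theorem is proved --- the union bound over starting vertices of the dual crossing produces exactly the kind of $(\varphi_1(N)+1)$ prefactor that the paper's proof absorbs, and your sketch implicitly performs the same absorption. You are in one respect more careful than the paper: the threshold $\theta(N)=(\widehat a + \widehat b)/2$ is random and data-dependent, and your monotone decoupling $P(\mathfrak{D}_1(A)) \geq P(\theta_0\text{-crossing}) - P(\theta(N) > \theta_0)$ is a clean, dependence-free way to handle it, whereas the paper's proof silently treats the threshold as deterministic. What the paper's route buys is brevity and inherited constants; what yours buys is a self-contained argument plus explicit treatment of the estimated threshold.

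Two soft spots remain in your version. First, strict supercriticality $p_1 > \tfrac12$ does not follow from symmetry of the noise alone: $P(\varepsilon \leq \delta) = \tfrac12 + P(0 < \varepsilon \leq \delta)$ exceeds $\tfrac12$ only if $F$ puts positive mass in $(0,\delta]$; a symmetric two-point noise with atoms outside $(0,(b-a)/2)$ gives $p_1 = \tfrac12$ exactly, and the crossing estimate collapses. The paper inherits whatever noise condition Theorem 5 of \cite{langovoy_davies_wittich} imposes, but since you argue from scratch you must state this mass condition (and, as you rightly flag, the uniformity of $c(p_1)$) explicitly. Second, the bound $P(\theta(N) > b-\delta) \leq \exp(-c'\varphi_1(N))$ is not delivered by the results you cite: Theorem \ref{Theorem_1} gives consistency with no rate, and Proposition \ref{Proposition_7} concerns the selection among $\varphi_0(N)$-squares for $\widehat a$, not concentration of $\widehat b$. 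The needed estimate is obtainable, but by a separate argument: under the boundedness assumption (\ref{18}), Hoeffding gives deviation probability $e^{-c\varphi^2}$ for each window mean, a union bound over the $O(N^2)$ windows costs a factor $e^{2\log N}$, and the growth condition $\varphi_1(N)/\log N \to \infty$ absorbs it. With these two repairs your argument is complete and, unlike the paper's, self-contained.
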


\begin{proof} (of Lemma \ref{Lemma_1}). Theorem 5 in \cite{langovoy_davies_wittich} implies that there is a constant $C^{'}_1 > 0$ such that for all $N \geq N^{'}_0$

\begin{equation}\label{40}
P \bigr(\, \mathfrak{D}_1 (A) \,\bigr) \,\geq\, 1 - ( \varphi_1 (N) + 1 )\,e^{-C^{'}_1\, \varphi_1 (N)}\,.
\end{equation}

\noindent Since

\[
1 - (n+1) \,e^{-C^{'}_1\,n} \,>\, 1 - e^{-C_1\,n}\,,
\]

\noindent for some $C_1$: $0 < C_1 < C^{'}_1$ and for all $n \geq N_0$ with some sufficiently large finite $N_0$, the lemma follows.
\end{proof}

\begin{lemma}\label{Lemma_2}
Let $A_1, A_2, \ldots , A_{\pi (N)}$ be sets of pixels corresponding to $\pi (N)$ particles on the screen. Then

\begin{equation}\label{41}
    P \biggr(\, \bigcap_{i = 1}^{\pi (N)} \mathfrak{D}_1 (A_i) \,\biggr) \,\geq\, 1 - \exp \bigr(\, \pi (N) \ln 2 - C_1  \varphi_1 (N)\,\bigr) \frac{\, 1 - \exp \bigr(\, - C_1 \varphi_1 (N) \pi (N)\,\bigr) \,}{\, 1 - \exp \bigr(\, - C_1 \varphi_1 (N)\,\bigr) \,} \,,
\end{equation}

\noindent where $C_1 > 0$ is independent of $N$ and the inequality holds for all $N \geq N_0$.

\end{lemma}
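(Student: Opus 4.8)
The plan is to reduce the claim to a bound on the probability that \emph{some} particle is missed, i.e.\ to control $P\bigl(\bigcup_{i=1}^{\pi(N)}\mathfrak{D}_1(A_i)^{c}\bigr)$, and then to recognize the stated right-hand side as a purely algebraic repackaging of this bound. Write $q:=\exp(-C_1\varphi_1(N))$ and $n:=\pi(N)$. Lemma~\ref{Lemma_1} supplies the single-particle estimate $P\bigl(\mathfrak{D}_1(A_i)^{c}\bigr)\le q$, valid for every $i$ once $N\ge N_0$, and this is the only probabilistic input I intend to use.

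The first step is Boole's inequality applied to the complementary (failure) events, which gives $P\bigl(\bigcup_{i=1}^{n}\mathfrak{D}_1(A_i)^{c}\bigr)\le\sum_{i=1}^{n}P\bigl(\mathfrak{D}_1(A_i)^{c}\bigr)\le nq$ and hence the clean bound $P\bigl(\bigcap_{i=1}^{n}\mathfrak{D}_1(A_i)\bigr)\ge 1-nq$. To land on the exact form displayed in~(\ref{41}) I would pass through the (deliberately loose) over-count over all nonempty index subsets: since a union is dominated by the sum of its singleton terms, which in turn is dominated by the sum over all nonempty $S\subseteq\{1,\dots,n\}$, one has $P\bigl(\bigcup_i\mathfrak{D}_1(A_i)^{c}\bigr)\le\sum_{\emptyset\neq S}P\bigl(\bigcap_{i\in S}\mathfrak{D}_1(A_i)^{c}\bigr)\le\sum_{\emptyset\neq S}q^{|S|}=(1+q)^{n}-1=\sum_{k=1}^{n}\binom{n}{k}q^{k}$. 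Bounding each binomial coefficient by $\binom{n}{k}\le 2^{n}$ and summing the resulting geometric series yields $\sum_{k=1}^{n}\binom{n}{k}q^{k}\le 2^{n}\sum_{k=1}^{n}q^{k}=2^{n}q\,\frac{1-q^{n}}{1-q}$, which upon substituting $2^{n}=\exp(\pi(N)\ln 2)$ and $q=\exp(-C_1\varphi_1(N))$ is precisely the subtracted quantity in~(\ref{41}). The estimate then follows by taking complements.

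I expect the only genuine subtlety to be the product bound $P\bigl(\bigcap_{i\in S}\mathfrak{D}_1(A_i)^{c}\bigr)\le q^{|S|}$ used in the over-count, since the detection events are \emph{not} independent: they share the common estimated threshold $\theta(N)=(\widehat a+\widehat b)/2$. The clean way around this is to condition on the high-probability event (guaranteed by Theorem~\ref{Theorem_1}) that $\widehat a$ and $\widehat b$ are accurate; on that event the threshold is effectively fixed, and because the particles $A_1,\dots,A_n$ occupy disjoint pixel sets while the noise is i.i.d., the failure events become conditionally independent, so a conditional version of Lemma~\ref{Lemma_1} delivers the product bound. It is worth stressing that this independence is needed \emph{only} to reproduce the precise algebraic shape of~(\ref{41}); the plain Boole step already gives the stronger bound $1-nq$, from which~(\ref{41}) follows unconditionally because its subtracted term dominates $nq$ (indeed $nq\le 2^{n}q\le 2^{n}q\frac{1-q^{n}}{1-q}$). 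Thus the main work is bookkeeping — the elementary geometric-series algebra and the careful handling of the shared threshold — rather than any new probabilistic estimate.
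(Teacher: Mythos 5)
Your proof is correct, but it takes a genuinely different route from the paper's. The paper exploits independence: since the particles occupy pairwise disjoint pixel sets and the noise is i.i.d., it writes $P\bigl(\bigcap_i \mathfrak{D}_1 (A_i)\bigr) = \prod_i P\bigl(\mathfrak{D}_1 (A_i)\bigr) \geq (1-q)^{\pi(N)}$ with $q = \exp(-C_1\varphi_1(N))$, then expands $(1-q)^{\pi(N)}$ binomially, drops the positive even-order terms, bounds $\binom{\pi(N)}{l} \leq 2^{\pi(N)}$, and sums the geometric series --- essentially your algebra, but applied to the alternating expansion of a product rather than to an over-count of subsets. You instead start from Boole's inequality, which needs no independence at all, and you correctly observe that the resulting bound $1 - \pi(N)\,q$ is already \emph{stronger} than (\ref{41}), since $\pi(N)\,q \leq 2^{\pi(N)} q\,(1-q^{\pi(N)})/(1-q)$; so the lemma follows unconditionally from Lemma \ref{Lemma_1} plus the union bound. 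This is a real advantage: your concern about the shared estimated threshold $\theta(N) = (\widehat a + \widehat b)/2$ is well-founded, because the paper's proof silently treats the events $\mathfrak{D}_1(A_i)$ as independent even though the threshold is computed from the whole image; your union-bound route sidesteps that issue entirely (both arguments, of course, inherit whatever threshold convention Lemma \ref{Lemma_1} is established under). One caution: your sketched conditioning repair of the product bound $P\bigl(\bigcap_{i\in S}\mathfrak{D}_1(A_i)^{c}\bigr)\leq q^{|S|}$ is not airtight as stated --- $\widehat b$ is obtained by maximizing over windows that typically lie inside particles, so conditioning on the accuracy of the estimates does not cleanly decouple the noise across regions --- but since that detour is dispensable, your proof stands on the union bound alone. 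What the paper's independence route buys, in principle, is the tighter intermediate bound $(1-q)^{\pi(N)} \geq 1-\pi(N)\,q$, though it then discards this advantage by weakening to the common final form (\ref{41}).
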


\begin{proof} (of Lemma \ref{Lemma_2}).
We derive the proof from Lemma \ref{Lemma_1}. Indeed, since particles cannot intersect with each other, $A_1, A_2, \ldots , A_{\pi (N)}$ are pairwise disjoint sets of pixels. Independence of noise variables and Lemma \ref{Lemma_1} now lead to

\begin{eqnarray*}
% \nonumber to remove numbering (before each equation)
  P \biggr(\, \bigcap_{i = 1}^{\pi (N)} \mathfrak{D}_1 (A_i) \,\biggr) &=& \prod_{i = 1}^{\pi (N)} P \bigr(\, \mathfrak{D}_1 (A_i) \,\bigr)\\
   &\geq& \prod_{i = 1}^{\pi (N)}  \Bigr( 1 - \exp \bigr(\, - C_1  \varphi_1 (N)\,\bigr) \Bigr)\\
    &=& {\Bigr( 1 - \exp \bigr(\, - C_1  \varphi_1 (N)\,\bigr) \Bigr)}^{\pi (N)} \\
    &=& \sum_{l = 0}^{\pi (N)} {(-1)}^l C^{l}_{\pi (N)} \exp \bigr(\, - C_1 \cdot l \cdot \varphi_1 (N)\,\bigr) \\
    &\geq& 1 - \sum_{l = 1}^{\pi (N)} C^{l}_{\pi (N)} \exp \bigr(\, - C_1 \cdot l \cdot \varphi_1 (N)\,\bigr) \,.
\end{eqnarray*}

\noindent In the above inequalities, $C^{k}_{n}$ denotes the standard binomial coefficient

\[
C^{k}_{n} \,=\, \frac{\, n! \,}{\, k! (n-k)! \,} \,.
\]

\noindent It is easy to show that, for all $1 \leq k \leq n$,

\[
C^{k}_{n} \,\leq\, 2^n \,.
\]

\noindent Plugging this in the above estimates, we get

\begin{eqnarray*}
% \nonumber to remove numbering (before each equation)
  P \biggr(\, \bigcap_{i = 1}^{\pi (N)} \mathfrak{D}_1 (A_i) \,\biggr) &\geq& 1 - \sum_{l = 1}^{\pi (N)} C^{l}_{\pi (N)} \exp \bigr(\, - C_1 \cdot l \cdot \varphi_1 (N)\,\bigr) \\
    &\geq& 1 - \sum_{l = 1}^{\pi (N)} 2^{\pi (N)} \exp \bigr(\, - C_1 \cdot l \cdot \varphi_1 (N)\,\bigr) \\
    &=& 1 - \sum_{l = 1}^{\pi (N)} \exp \bigr(\, \pi (N) \ln 2 - C_1 \cdot l \cdot \varphi_1 (N)\,\bigr)  \\
    &=& 1 - \exp (\, \pi (N) \ln 2 \,) \cdot  \exp \bigr(\, - C_1 \varphi_1 (N)\,\bigr) \times \\
    & & \times \frac{\, 1 - \exp \bigr(\, - C_1 \varphi_1 (N) \pi (N)\,\bigr) \,}{\, 1 - \exp \bigr(\, - C_1 \varphi_1 (N)\,\bigr) \,} \,.
\end{eqnarray*}

\noindent The last closed form expression for the exponential sum was obtained from the following geometric series expression:

\[
\sum_{l =1}^{m} \exp (- l x) \,=\, e^{-x} \, \frac{\, 1 - e^{- m x} \,}{\, 1 - e^{- x} \,} \,.
\]

\noindent This finishes the proof of Lemma \ref{Lemma_2}.

\end{proof}

\noindent By a simple passage to the limit, Lemma \ref{Lemma_2} gives us the following rate of convergence for the probability of missing any particle.

\begin{lemma}\label{Lemma_3}
Let $A_1, A_2, \ldots , A_{\pi (N)}$ be sets of pixels corresponding to $\pi (N)$ particles on the screen. Then there exists a constant $C_1 > 0$ such that

\begin{equation}\label{42}
    \limsup_{N \rightarrow \infty} \, \frac{1 - P \biggr(\, \bigcap_{i = 1}^{\pi (N)} \mathfrak{D}_1 (A_i) \,\biggr)}{\,\exp \bigr(\, \pi (N) \ln 2 - C_1  \varphi_1 (N)\,\bigr)\,} \,\leq\,1 \,.
\end{equation}

%    1 - P \biggr(\, \bigcap_{i = 1}^{\pi (N)} \mathfrak{D}_1 (A_i) \,\biggr) \,\leq\, \exp \bigr(\, \pi (N) \ln 2 - C_1  \varphi_1 (N)\,\bigr)  \,,

%\noindent where $C_1 > 0$ is independent of $N$ and the inequality holds for all $N \geq N_0$.

\end{lemma}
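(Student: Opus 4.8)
The plan is to read the result off directly from the non-asymptotic bound of Lemma~\ref{Lemma_2}, which already carries all of the combinatorial content; what remains is a single elementary limit computation. First I would rewrite the conclusion of Lemma~\ref{Lemma_2} in complementary form, moving the probability to the left-hand side, so that
\[
1 - P \biggr(\, \bigcap_{i = 1}^{\pi (N)} \mathfrak{D}_1 (A_i) \,\biggr) \,\leq\, \exp \bigr(\, \pi (N) \ln 2 - C_1  \varphi_1 (N)\,\bigr) \cdot \frac{\, 1 - \exp \bigr(\, - C_1 \varphi_1 (N) \pi (N)\,\bigr) \,}{\, 1 - \exp \bigr(\, - C_1 \varphi_1 (N)\,\bigr) \,}\,.
\]
Dividing both sides by the strictly positive quantity $\exp(\pi(N)\ln 2 - C_1\varphi_1(N))$ isolates exactly the ratio appearing in~(\ref{42}) and bounds it above by the geometric-series factor
\[
Q(N) \,:=\, \frac{\, 1 - \exp \bigr(\, - C_1 \varphi_1 (N) \pi (N)\,\bigr) \,}{\, 1 - \exp \bigr(\, - C_1 \varphi_1 (N)\,\bigr) \,}\,.
\]

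Second, I would analyze the asymptotics of $Q(N)$. Assumption~(\ref{7}) forces $\varphi_1(N) \to \infty$ as $N \to \infty$, since $\varphi_1(N) = \log N \cdot \bigl(\varphi_1(N)/\log N\bigr)$ is a product of two diverging factors. Consequently $\exp(-C_1\varphi_1(N)) \to 0$, so the denominator of $Q(N)$ tends to $1$. For the numerator, because $\pi(N) \geq 1$ we have $\varphi_1(N)\pi(N) \geq \varphi_1(N) \to \infty$, whence $\exp(-C_1\varphi_1(N)\pi(N)) \to 0$ and the numerator also tends to $1$. Therefore $Q(N) \to 1$. Since the left-hand ratio in~(\ref{42}) is dominated by $Q(N)$ for all large $N$, taking limits superior yields $\limsup_{N\to\infty}(\,\cdot\,) \leq \lim_{N\to\infty} Q(N) = 1$, which is precisely the asserted bound.

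There is essentially no hard step here: the entire difficulty was already absorbed into the binomial estimate $C^k_n \leq 2^n$ and the closed-form geometric summation carried out in the proof of Lemma~\ref{Lemma_2}. The only point requiring a moment's care is confirming that both exponentials inside $Q(N)$ genuinely vanish in the limit; this rests solely on the growth condition~(\ref{7}) together with $\pi(N) \geq 1$. Note that it does \emph{not} use the more delicate hypothesis $\pi(N)/\log N \to 0$: that assumption is what makes the overall rate~(\ref{43}) decay to zero, but it is not needed for the comparatively weaker $\limsup \leq 1$ statement of this lemma.
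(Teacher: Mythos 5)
Your proposal is correct and follows exactly the paper's route: the paper derives Lemma~\ref{Lemma_3} from Lemma~\ref{Lemma_2} ``by a simple passage to the limit,'' which is precisely your argument of dividing through by $\exp\bigl(\pi(N)\ln 2 - C_1 \varphi_1(N)\bigr)$ and showing the geometric-series factor tends to $1$ via condition~(\ref{7}). You in fact spell out details the paper leaves implicit (the divergence of $\varphi_1(N)$ and the non-necessity of $\pi(N)/\log N \to 0$ for this lemma), and these observations are accurate.
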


\noindent This completes the proof of Part II of Theorem \ref{Theorem1}, because the event $\bigcap_{i = 1}^{\pi (N)} \mathfrak{D}_1 (A_i)$ is exactly the event that each single particle was detected.

\emph{Part III}. Suppose there were no particles on the screen. In this case, the combined probability that one or more particles were falsely detected, doesn't exceed the probability that $\{ \overline{Y}_{i, j = 1}^N \}$ contains a black cluster of size at least $\varphi_1 (N)$. This last probability has an explicit upper bound that is given, for example, in Theorem 1 of \cite{langovoy_davies_wittich}. This completes the proof of Theorem \ref{Theorem1}.

\end{proof}

%Theorem \ref{Theorem1} means that Algorithm 1 is of quickest possible order: it is \emph{linear} in the input size. It is difficult to think of an algorithm working quicker in this problem. Indeed, if the image is very small and located in an unknown place on the screen, or if there is no image at all, then any algorithm solving the detection problem will have to at least upload information about $O(N^2)$ pixels, i.e. under general assumptions of Theorem \ref{Theorem1}, any detection algorithm will have at least linear complexity.
%
%Another important point is that Algorithm 1 is not only consistent, but that it has \emph{exponential} rate of accuracy.
%
%It is also interesting to remark here that, although it is assumed that the object of interest contains a $\varphi_{im}(N) \times \varphi_{im}(N)$ black square, one cannot use a very natural idea of simply considering sums of values on all squares of size $\varphi_{im}(N) \times \varphi_{im}(N)$ in order to detect an object. Neither some sort of thresholding can be avoided, in general. Indeed, although this simple idea works very well for normal noise, it cannot be used in case of unknown and possibly irregular or heavy-tailed noise. For example, for heavy-tailed noise, detection based on non-thresholded sums of values over subsquares will lead to a high probability of false detection. Whereas the method of the present paper still works.
%

\section{Application to particle picking in cryo-electron microscopy}
Cryo-electron microscopy (cryo-EM) is an emerging experimental method to characterize the structure of large biomolecular assemblies.
Single particle cryo-EM records 2D images (so-called micrographs) of projections of the three-dimensional particle, which need to be processed to obtain the three-dimensional reconstruction.
A crucial step in the reconstruction process is particle picking which involves detection of particles in noisy 2D micrographs with low signal-to-noise ratios of typically 1:10 or even lower. Typically, each picture contains a large number of particles, and particles have unknown irregular and nonconvex shapes.

We applied our method to micrographs measured in cryo-electron microscopy.
The results are shown in Figure \ref{fig:cryoem}.
A micrograph of GroEL was downloaded from a public repository for particle picking (http://ami.scripps.edu/redmine/projects/ami/wiki/GroEL\_dataset\_I).
Before running the scan estimators the original $2400 \times 2400$ was down-sampled twice to improve the signal to noise and normalized to a maximum intensity of 1.
Our scan estimators yield $a=0.319$ (using a window size of 65) and $b=0.453$ (using a window with side 9), resulting in the threshold $\theta=0.386$.
After running the percolation analysis on the thresholded image (shown in Figure \ref{subfig:binary}) we delete black clusters occupying less than 30 pixels.
The result of the filtering procedure is shown in Figure \ref{subfig:filtered}. For each of the remaining black clusters, the algorithm claims that there is a particle containing this cluster.
%
%%%%%%%%%%%%%%%%%%%%%%%%%
%
\begin{figure}
\centering
\subfigure[]{
\includegraphics[scale=0.22]{./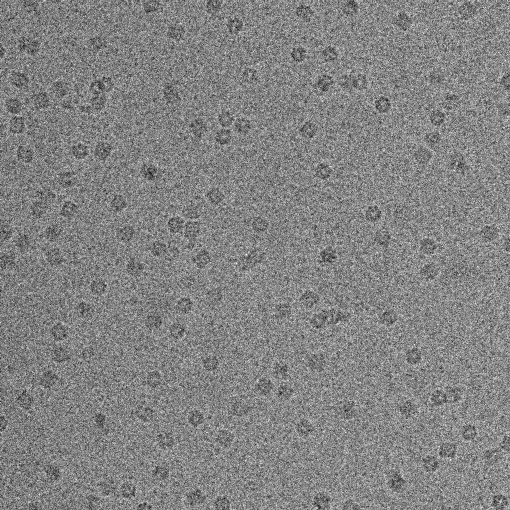}
\label{subfig:original}
}
\subfigure[]{
\includegraphics[scale=0.22]{./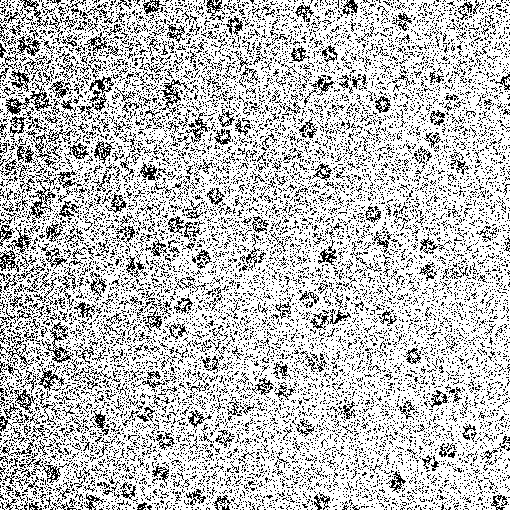}
\label{subfig:binary}	
}
\subfigure[]{
\includegraphics[scale=0.22]{./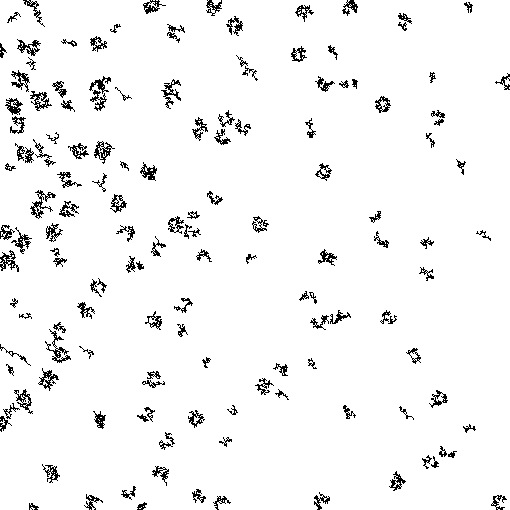}
\label{subfig:filtered}	
}
\caption{%
Application to particle picking in cryo-EM.
(a): Original micrograph with particles of GroEL.
(b): Thresholded image based on spatial scan estimators of the noise and the signal.
(c): Thresholded image with random black clusters removed.
\label{fig:cryoem}}
\end{figure}
%
%%%%%%%%%%%%%%%%%%%%%%%%%%
%

\bibliographystyle{plainnat}
%\bibliographystyle{cj}
%\bibliography{Randomized_Algorithms_and_Percolation}
\bibliography{papiere}

\end{document}